\documentclass[conference]{IEEEtran}
\usepackage[english]{babel}
\usepackage[utf8]{inputenc}
\usepackage{amsmath}
\usepackage{amsthm, amssymb, cite, graphicx, epsfig, url, color, setspace, epstopdf, subfigure}
\usepackage[colorinlistoftodos]{todonotes}
\usepackage[noend]{algorithmic}
\usepackage[ruled]{algorithm}
\usepackage{epstopdf}
\usepackage{amsthm,amssymb}
\usepackage{cases}
\setlength{\intextsep}{8pt minus 2pt}
\usepackage{mathtools,breqn}
\usepackage[font=small]{caption}
\usepackage{epstopdf}

\newtheorem{prop}{Proposition}

\newcommand{\ignore}[1]{}

\title{Optimal Timing of Moving Target Defense: A Stackelberg Game Model}

\author{\IEEEauthorblockN{
Henger Li and
Zizhan Zheng\\
\IEEEauthorblockA{Department of Computer Science, Tulane University, New Orleans, USA}
Email: \{hli30, zzheng3\}@tulane.edu}}

\begin{document}

\maketitle

\begin{abstract}
As an effective approach to thwarting advanced attacks, moving target defense (MTD) has been applied to various domains. Previous works on MTD, however, mainly focus on deciding the sequence of system configurations to be used and have largely ignored the equally important timing problem. Given that both the migration cost and attack time vary over system configurations, it is crucial to jointly optimize the spatial and temporal decisions in MTD to better protect the system from persistent threats. In this work, we propose a Stackelberg game model for MTD where the defender commits to a joint migration and timing strategy to cope with configuration-dependent migration cost and attack time distribution. 
The defender's problem is  formulated as a semi-Markovian decision process and a nearly optimal MTD strategy is derived by exploiting the unique structure of the game.     
\end{abstract}
\section{Introduction}~\label{sec:intro}
Cyber-attacks are becoming increasingly more adaptive and sophisticated. One example is Advanced Persistent Threats (APTs)~\cite{APT}, an emerging class of continuous and stealthy hacking processes launched by incentive
driven entities. To avoid immediate detection and obtain long-term benefit, an advanced attacker may carefully cover its tracks, e.g., by internationally operating in a ``low-and-slow'' fashion~\cite{graph-apt}. The stealth and persistent nature makes these attacks extremely difficult to defense using traditional techniques that focus on one-shot attacks of known types. 

An important obstacle in combating stealthy attacks is {\it information asymmetry}. An advanced attack often involves an information collection stage (e.g., through probing the system) to dynamically identify the best target to attack. In contrast, a defender typically knows much less about a stealthy and adaptive attacker. To revert the information asymmetry, an promising approach is moving target defense (MTD), where the defender constantly updates the system configuration to increase the attacker's uncertainty. By exploiting the diversity and randomness at different system layers, various MTD techniques have been proposed including dynamic networks~\cite{jafarian2012openflow,IP-mapping-MTD14}, dynamic platforms~\cite{salamat2011runtime}, dynamic runtime environments~\cite{ASLR}, dynamic software~\cite{GenProg}, and dynamic data~\cite{nguyen2008security}. 

In addition to empirical evaluation of domain specific MTD techniques, decision and game theoretic approaches have recently been adopted to derive more cost-effective MTD solutions. In particular, a zero-sum dynamic game for MTD is proposed in~\cite{zhu2013game} where a fixed migration cost (i.e., the cost of switching from one configuration to another) is assumed. More recently, Bayesian Stackelberg games (BSG) have been applied to MTD~\cite{sengupta2017game}, where the
defender commits to an {\it i.i.d.} strategy independent of the real-time system configuration, leading to a suboptimal strategy. In our previous work~\cite{gamesec-17}, we have proposed a Markovian modeling of MTD where the decision on the next system configuration to be used depends on the current one and derived the optimal Stackelberg strategy.

An important limitation of existing game models for MTD, however, is that the {\it temporal} decision has been largely ignored. Previous studies mainly focus on the {\it spatial} decision, i.e., what is the next configuration to be used, while assuming a simplified decision on timing. In particular,  constant attack times and periodic migration policies are commonly assumed in previous work. 
However, different system configurations typically require different techniques and expertise to set up and to identify and exploit vulnerability. Thus, both the migration cost and the amount of time that the attacker needs to take down a system are configuration dependent. 
Both of them should be taken into consideration when deciding {\it when} to move as large attack times (or migration costs) imply less frequent updates. Therefore, a simple periodic migration strategy is far from satisfactory. 

In this paper, we make the first effort on the joint optimization of spatial and temporal decisions in MTD. We extend our Markovian modeling of MTD in~\cite{gamesec-17} by introducing 
attack times that are both random and configuration dependent. We consider a Stackelberg game model with the defender as the leader and the attacker as the follower. The defender commits to a stationary MTD strategy at the beginning of the game, which includes both a configuration transition matrix (spatial decision) and a set of defense periods, one for each configuration (temporal decision). Further, instead of minimizing the long-term discounted cost as in~\cite{gamesec-17}, we consider the more challenging time-average cost objective in this work, which is more reasonable for patient attackers targeting long-term advantages. The problem of finding the best strategy for the defender is formulated as a semi-Markovian decision process~\cite{Puterman-MDP} with continuous decision variables.
Although semi-MDPs with continuous decisions are difficult to solve in general, we show that the classic value iteration (VI) algorithm can be applied to our problem to obtain a nearly optimal stationary strategy. We further derive an efficient solution to the Min-Max problem in each iteration of VI by utilizing the unique structure of the MTD game. 

We have made the following contributions in this paper.
\begin{itemize}
    \item We propose a new active defense paradigm that incorporates {\it spatial} and {\it temporal} decisions to achieve robust moving target defense.
    \item We extend the Bayesian Stackelberg game (BSG) model by considering Markovian defense strategies, which are more general than the repeated decisions in BSG and are more appropriate for MTD. 
    \item We derive a nearly optimal defense strategy based on the value iteration technique and propose an efficient algorithm for each iteration by utilizing the unique structure of the Min-Max problem in the MTD game.
\end{itemize}

The rest of the paper is organized as follows. We review the related work on MTD games in Section~\ref{sec:related} and present the game model and problem formulation in
Section~\ref{sec:model}. The optimal defense strategy and its analysis are discussed in Section~\ref{sec:solution}. We evaluate our solution in Section~\ref{sec:eval} and conclude the paper in Section~\ref{sec:conclusion}.
\section{Related Work}\label{sec:related}
Several game theoretic models have been proposed for MTD in the last few years~\cite{mtdgame-survey-2017}. A zero-sum dynamic game for MTD is proposed in~\cite{zhu2013game}. 
Each player chooses its action independently in each round according to a mixed strategy, and gets immediate feedback on its payoff. 
However, the zero-sum assumption does not hold in many security scenarios. Further, a fixed migration cost is assumed, which neglects the heterogeneity in configurations.  
More recently, Bayesian Stackelberg games (BSG) have been applied to MTD in web applications~\cite{sengupta2017game}, where the defender commits to an $i.i.d.$ migration strategy, 
which is suboptimal when the migration cost is configuration dependent. A BSG model for the closely related cyber deception problem is studied in~\cite{haifeng-CDG}, which, however, considers a purely static setting. Several Markov models for MTD have also been proposed recently~\cite{zhuang2014model,maleki2016markov}. 
However, these works focus on analyzing the expected time needed for the attacker to compromise the system under simple defense strategies and do not consider how to derive optimal MTD strategies. 
In our recent work~\cite{gamesec-17}, we have extended the BSG models by introducing Markovian strategies into MTD while still considering periodic migrations as in previous works.
Initiated by the FlipIt game~\cite{flipit}, optimal timing of security updates has received a lot of interest recently~\cite{asymmetric-model,ming-gamesec2015,optimal-timing}, where instead of switching between configurations, the system is recovered after a certain time period. However, these studies do not apply to the optimal timing of MTD directly. 

\section{MTD Game Model and Problem Formulation}\label{sec:model}
In this section, we present our game theoretic model for MTD and formulate the defender's optimization problem.  Figure~\ref{fig:model} gives an example of our attack-defense model. 

\subsection{System Model}
\noindent{\bf System Configurations:} We consider a system to be protected and two players, an attacker and a defender. The system has a set of configuration parameters that the defender can choose from. Examples include IP addresses, network topology, OS versions, memory address space layout, etc. To meet the system’s integrity and performance requirement, only a subset of configurations is valid, which is defined as the system configuration space, denoted by $S$. Let $n = |S|$ denote the number of configurations. 
\begin{table}
\begin{center}
\small{
\begin{tabular}{|p{0.07\linewidth}|p{0.78\linewidth}|}
 \hline
 $S$ & set of all configurations\\
 \hline
 $n$ & number of configurations\\
 \hline
 $a_j$ & random attack time for  configuration $j$\\
 \hline
 $p_{ij}$ & transition probability from configuration $i$ to $j$\\
 \hline
 $P$ & transition probability matrix\\
 \hline
 $\mathbf{p}_i$ & the $i$-th row in $P$\\
 \hline
 $\alpha$ & lower bound of $p_{ij}$\\
 \hline
 $m_{ij}$ & migration cost from configuration $i$ to $j$\\
 \hline
 $M$ & migration cost matrix\\
 \hline
 $\tau_{i}$ & the next defense period when the current configuration is $i$\\
 \hline
 $\overline{\tau}$, $\underline{\tau}$ & maximum/minimum defense period\\
 \hline
 $\gamma$ & a parameter in transforming semi-MDP to MDP\\
 \hline
 $\delta$ & step size in searching for $\tau$\\
 \hline
 $\omega$ & a parameter controlling the stopping criterion in Algorithm 1\\
 \hline
\end{tabular}
}
\caption{List of symbols in the paper}
\end{center}
\end{table}

\vspace{1ex}
\noindent{\bf Defense Model:} The defender constantly migrates the system configuration to increase the attacker's uncertainty. We assume that a migration happens instantaneously subject to a cost $m_{ij}$ if the system moves from configuration $i$ to configuration $j$. We allow $m_{ii}>0$ to model the cost of recovering the system to the same configuration. Let $M$ denote the matrix of migration costs $\{m_{ij}\}_{n \times n}$. A continuous time horizon is considered. Let $t_k$ denote the time instance when the $k$-th migration happens  
and $s_k$ the system configuration in the $k$-th defense period (from $t_{k-1}$ to $t_k$). At the end of the $k$-th defense period, the defender picks the next configuration $s_{k+1}$ with probability $p_{s_ks_{k+1}}$. We assume $t_0 = 0$ and let $s_0$ denote the initial configuration (before $t_0$).

We assume that the defender adopts a {\it stationary} strategy consisting of (1) a transition matrix $P = \{p_{ij}\}_{n \times n}$ where $p_{ij}$ is the probability of moving to configuration $j$ when the system is currently in configuration $i$, and (2) a vector $\{\tau_i\}_{i \in S}$, where $\tau_i$ is the next defense period to be used if the system is currently in configuration $i$. According to this definition, the $k-th$ defense period only depends on $s_{k-1}$ but not $s_k$ (see Figure~\ref{fig:model} for an example). This is to simplify the decision problem as we discuss below. We may also consider strategies where $\tau_k$ depends on $s_k$ only or both $s_{k-1}$ and $s_{k}$, which is left to our future work. Without loss of generality, we assume that $\tau_i \in [\underline{\tau},\overline{\tau}]$ for any $i$ where $\underline{\tau}>0$ and $\overline{\tau} < \infty$. Let $\mathbf{p}_i$ denote the $i$-th row of $P$.

\vspace{1ex}
\noindent{\bf Attack Model:} We consider a persistent attacker that continuously probes and attacks the system. We assume that once a migration happens, the attacker learns this fact immediately and makes a guess on the new configuration. Further, the amount of time needed to compromise the system under configuration $j$ is modeled as a random variable $a_j$ with distribution $A_j$ and is $i.i.d.$ across attacks. Consider the $k$-th defense period. Let $\hat{s}_k$ denote the attacker's guess of $s_k$. Under the stationary defense strategy described above, the probability that the attacker's guess is correct is $\text{Pr}(\hat{s}_k = s_k) = p_{s_{k-1}\hat{s}_k}$. The expected amount of time that the system is compromised in the $k$-th period then becomes 
$p_{s_{k-1}\hat{s}_k} \mathbb{E}[\max(\tau_{s_{k-1}} - a_{\hat{s}_k},0)]$ where the expectation is with respect to the randomness of attack time. 

\begin{figure}[t]
\centering
\includegraphics[width=\linewidth]{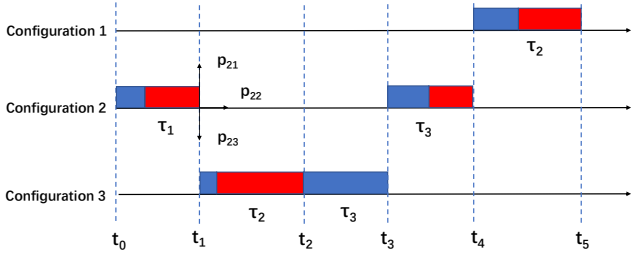}
\caption{\small An example of the game model where configuration 1 is the initial configuration. A blue (resp. red) block denotes a time interval when the system is protected (resp. compromised). $\tau_i$ is the length of the current defense period when the previous configuration is $i$.}
\label{fig:model}
\end{figure}

\vspace{1ex}
\noindent{\bf Stackelberg Game:} We assume that the attacker always learns $s_k$ at the end of the $k$-th defense period (a worst-case scenario from the defender’s perspective). Consequently, the attacker may also learn the defender’s stationary strategy once enough samples are collected. To simplify the analysis, we assume that the defender announces its strategy in the beginning of the game. We further assume that the attacker is myopic and always exploits the most beneficial configuration according to the defender’s
strategy and the previous system configuration it observed. We then have $\hat{s}_{k} = \text{argmax}_{j \in S} p_{s_{k-1}j} \mathbb{E}[\max(\tau_{s_{k-1}} - a_j,0)]$. Effectively, we are considering a Stackelberg game with the defender as the leader and the attacker as the follower. As is typical in security games, we assume that the defender knows the attack time distribution (but not its realization). 
It is important to note that our game model is more general than the Bayesian Stackelberg Game (BSG) models in~\cite{sengupta2017game, haifeng-CDG} since the leader (defender) commits to a configuration-dependent Markovian strategy rather than a simple ${\it i.i.d.}$ strategy as in BSG models where $p_{ij}$ is a constant across $i$. 
To simplify
the notation, we let $w_{ij} = \mathbb{E}[\max(\tau_i - a_j,0)]$ for $i,j \in S$.

\subsection{Defender's Problem as a Semi-MDP}
The defender's objective is to strike a balance between the loss from attacks and the cost of migration. To this end, we formulate the defender's problem as an average-cost semi-MDP as follows. We define the state of the system as the set of configurations $S$. Let $s_0$ be the initial state (before the game begins). We consider stationary policies only. Each time the system is in state $i$, a control $\mu(i) \triangleq (\mathbf{p}_i,\tau_i)$ is applied, and the defender incurs an expected cost  $c(i,\mu(i)) = \max_{j}(p_{ij}w_{ij}) + \sum_{j} p_{ij}m_{ij}$, and the system moves to state $j$ with probability $p_{ij}$. Note that the cost function includes both the expected loss from attacks as well as the expected migration cost. For a given policy $\mu$, the time-average cost of the defender starting from an initial state $s_0$ is defined as: 
\begin{align}
C_\mu(s_0) &= \limsup_{N\rightarrow\infty}\frac{\sum^{N-1}_{k=0} c(s_k,\mu(s_k))}{\sum^{N-1}_{k=0}{\tau_{s_k}}} \nonumber \\
&=\limsup_{N\rightarrow\infty}\frac{\sum^{N-1}_{k=0} [\max_{j}(p_{s_kj}w_{s_kj}) + \sum_{j} p_{s_kj}m_{s_kj}]}{\sum^{N-1}_{k=0}{\tau_{s_k}}} \label{semi-mdp}
\end{align}

The defender's goal is to commit to a policy $\mu$ that minimizes its time-average cost for any initial state. We thus obtain an infinite-horizon semi-MDP with average cost criterion and continuous decision variables.

\section{Optimal MTD Strategies}\label{sec:solution}
In this section, we propose efficient algorithms to find a nearly optimal solution to the defender's problem. 

\subsection{Approximation and Transformation}
The semi-MDP defined in~\eqref{semi-mdp} has a finite state space $S$ and a compact action space $[0,1]^{n} \times [\underline{\tau},\bar{\tau}]$. There are two major challenges to solve the semi-MDP. First, when an arbitrary transition matrix $P$ is allowed, the Markov chain associated with a given stationary policy is not necessarily unichain. Consequently, the time-average cost may vary over the initial configurations~\cite{Puterman-MDP}. Second, the continuous decision variables make it challenging to apply standard techniques such as value iteration and policy iteration as the Min-Max problem (defined below) in each iteration can be difficult to solve. 

We discuss how to address the second challenge in the next subsection. To address the first challenge, we impose the following constraint on $P$ by requiring that $p_{ij} \geq \alpha$ for any $i,j$ where $\alpha>0$ is a small number. 
With this simple constraint, the unichain requirement is always satisfied. To understand why the assumption is reasonable, consider two stationary policies with transition matrices $P$ and $P'$, respectively, both of which are unichain. If $|P_{ij}-P'_{ij}| \leq \alpha$ for any $i,j$, then the stationary distribution of $P$ is close to that of $P'$ by making $\alpha$ small enough~\cite{perturbation-mdp}. Thus, the loss of optimality is negligible for small enough $\alpha$ if we only consider unichain policies. On the other hand, in a multichain policy, some configurations are never used for MTD, which is unlikely to happen in practice as 
it reduces the attacker's uncertainty. 
A rigorous understanding of the multichain case is left to our future work. 

With the above assumption and the fact that the single stage cost $c(i,\mu(i))$ is continuous in $p_{ij}$ and $\tau_i$, it is known that the optimal time-average cost is independent of the initial configuration, and further, there is a stationary deterministic policy that is optimal~\cite{Puterman-MDP}. Moreover, we can apply a standard trick to transform the semi-MDP to a discrete-time MDP with average cost criterion defined below: 
\begin{align}
\tilde{C}_\mu(s_0) &= \limsup_{N\rightarrow\infty}\frac{1}{N}\sum^{N-1}_{k=0} \tilde{c}(s_k,\mu(s_k)) 
\label{mdp}
\end{align}

\noindent with the single stage cost and transition probabilities given by~\cite{Puterman-MDP}:
\begin{align}
\tilde{c}(i,\mu(i)) &= \frac{\max_{j}{(w_{ij}p_{ij})} + \sum_{j}{ p_{ij}m_{ij}}}{\tau_i}    \\
\tilde{p}_{ij} &= \gamma \frac{p_{ij}-\delta_{ij}}{\tau_i} + \delta_{ij}  
\end{align}

\noindent where $\delta_{ij} = 1$ if $i = j$ and $\delta_{ij}=0$ otherwise, and $\gamma$ satisfies $0 < \gamma < \tau_i/(1 - p_{ii})$ for any $i \in S$ and $p_{ii}<1$. We choose $\gamma = \underline{\tau}$ in this work. The original semi-MDP and the transformed discrete-time MDP have the same class of stationary policies. Further, for each stationary policy $\mu$, $C_{\mu}(i) = \tilde{C}_{\mu}(i)$ for any $i\in S$. This result does not require any assumption about the chain structures of the Markov chains associated with the stationary policies.

\subsection{Value Iteration Algorithm}
Given the transformation defined above, the problem then boils down to solving the average cost MDP in~\eqref{mdp}. Since $\tilde{p}_{ij}>0$ for any $i,j$, the MDP is still unichain. As the state space is finite and the action space is a separable metric space, it is known that the standard value iteration algorithm converges to the optimal average cost~\cite{vi-average-MDP}. The main challenge is to design an efficient solution to the Min-Max problem in each iteration as discussed below. 

The VI algorithm (see Algorithm 1) maintains a value vector $V^t \in \mathbb{R^+}^{n}$ in each iteration $t$. Initially, $V^0(i) = 0$ for each $i \in S$. In iteration $t$, the algorithm solves the following Min-Max problem for each configuration $i \in S$ (lines 4-11):
\begin{align}
V^t(i) &= \min_{\mathbf{p}_i,\tau_i}\Big[\tilde{c}(i,\mu(i))+\sum_{j\in S}\tilde{p}_{ij}V^{t-1}(j)\Big] \nonumber \\
&= \min_{\mathbf{p}_i,\tau_i}\Bigg[
\frac{\max_{j}{(w_{ij}p_{ij})} + \sum_{j}{p_{ij}(m_{ij}+\gamma V^{t-1}(j))}}{\tau_i} \nonumber \\
&\hspace{25ex}+(1-\frac{\gamma}{\tau_i})V^{t-1}(i) \Bigg] \nonumber \\
& \hspace{5ex} s.t. \hspace{2ex} \mathbf{p}_i \in [\alpha,1]^{n}, \sum_j p_{ij} = 1, \tau_i \in [\underline{\tau},\bar{\tau}].
\label{min-max}
\end{align}

Let $V^t(i,\mathbf{p},\tau)$ denote the value of the objective function in~\eqref{min-max} when $\mathbf{p}_i = \mathbf{p}$ and $\tau_i = \tau$. The Min-Max problem is difficult to solve due to the coupling of $P$ and $\tau$. To this end, we discretize the search space for $\tau_i$. For each $\tau_i \in \{\underline{\tau},\underline{\tau}+\delta,\underline{\tau}+2\delta,..., \bar{\tau}]$ where $\delta$ is a parameter, (5) is solved to search for the best $\mathbf{p}_i$ (lines 6-10). An efficient solution for this step is discussed below. A smaller $\delta$ gives a better solution at the expense of a higher searching overhead. 

Algorithm 1 stops when $V^t(i)-V^{t-1}(i)$ is close to a constant across $i$ (see lines 15-17 where $\omega$ is a parameter). When the algorithm stops, $V^t(i)-V^{t-1}(i)$ provides a good approximation of the optimal time-average cost. The error bound of the value iteration algorithm is established in~\cite{vi-average-MDP}.

\begin{algorithm}[t]
\caption{Value Iteration algorithm for the MTD game}
Input: $S, \underline{\tau}, \bar{\tau}, M, \alpha, \gamma, \epsilon, \delta.$ \\
Output: $\tau^{*} ,P^{*}.$
\begin{algorithmic}[1] 
\STATE $t = 0, V^{0}(i)=0,$ $\forall i\in S;$ 
\REPEAT
\STATE  $t=t+1$;
\FOR{$i\in S$}
\STATE $v=\infty;$
\FOR{$\tau=\underline{\tau};\tau\leq\bar{\tau};\tau=\tau+\delta$}
\STATE $\mathbf{p}' = \arg\min_{\mathbf{p}} V^{t}(i,\mathbf{p}, \tau)$;
\STATE $v'=V^{t}(i,\mathbf{p}', \tau)$
\IF{$v' < v$}
\STATE $\mathbf{p}^*_{i}=\mathbf{p}', \tau^{*}_i=\tau, v = v';$
\ENDIF
\ENDFOR
\STATE  $V^{t}(i)=v$;
\ENDFOR
\STATE $\overline{V}=\max_{i\in S}{|V^{t}(i)-V^{t-1}(i)|},$
\STATE $\underline{V}=\min_{i\in S}{|V^{t}(i)-V^{t-1}(i)|};$
\UNTIL {$\overline{V}-\underline{V} < \omega\underline{V}$}
\end{algorithmic}
\end{algorithm}

\subsection{Solving the Min-Max Problem}
A major obstacle in implementing Algorithm 1 is to find an efficient solution to the Min-Max problem~\eqref{min-max} for a fixed $\tau$ (line 7 in Algorithm 1). To this end, we first show that the optimal $\mathbf{p}$ to this problem has a simple structure, which significantly simplifies the problem. 

In the following discussion, we consider the Min-Max problem for configuration $i$ in iteration $t$ and for a fixed $\tau$. We drop the indices $i$ and $t$ to ease the notation. Let $m_j = m_{ij}, p_j = p_{ij}, w_j = w_{ij}$, and $V(j) = V^{t-1}(j)$. Let $w_{\min} = \min_{i \in S} w_{i}$, $w_{\max} = \max_{i \in S} w_{i}$, and $\rho = \frac{w_{\max}}{w_{\min}}$. Since the denominator in the first term and the second term in~\eqref{min-max} are both are constants, it suffices to consider the following problem:
\begin{align}
\min_{\mathbf{p}}\Big[
\max_{j}{(w_{j}p_{j})} + \sum_{j}{p_{j}(m_{j}+\gamma V(j))}\Big] \nonumber\\
\hspace{2ex} s.t. \ \ \mathbf{p} \in [\alpha,1]^{n}, \sum_j p_{j} = 1. \label{mm}
\end{align}

Let $U(\mathbf{p})$ denote the value of the objective function in~\eqref{mm} for a given $\mathbf{p}$. Let $\theta_{j}=m_{j}+\gamma V(j)$ denote the coefficient of $p_{j}$ in the second term of~\eqref{mm}. For a given $\mathbf{p}$, let $k$ be any configuration with $w_{k}p_{k}=\max_{j \in S}(w_{j}p_{j})$. We partition $S \backslash \{k\}$ into two sets where $A=\{a \in S:\theta_{a}> w_{k}+\theta_{k}\}$ and $B = S \backslash (A \cup \{k\})$. Let $\{b_j\}_{1 \leq j \leq |B|}$ denote the sequence of elements in $B$ sorted in $\theta$ non-decreasingly. 


\begin{prop}
For any optimal $\mathbf{p}$ to~\eqref{mm}, $p_{a}=\alpha, \forall a \in A$.
\end{prop}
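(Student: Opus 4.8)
The plan is to prove this by a local exchange (perturbation) argument. Fix an optimal $\mathbf{p}$, let $k$ be a configuration with $w_k p_k = \max_{j\in S}(w_j p_j)$, and define $A$ as in the statement. Suppose, for contradiction, that $p_a > \alpha$ for some $a \in A$. I would construct a feasible competitor $\mathbf{p}^{\epsilon}$ that transfers an infinitesimal mass $\epsilon > 0$ from $a$ onto $k$, namely $p^{\epsilon}_a = p_a - \epsilon$, $p^{\epsilon}_k = p_k + \epsilon$, and $p^{\epsilon}_j = p_j$ for all other $j$, and then show that this strictly lowers the objective $U$, contradicting optimality.

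First I would verify feasibility. The normalization $\sum_j p^{\epsilon}_j = 1$ is preserved by construction. Since $p_a > \alpha$, every sufficiently small $\epsilon$ keeps $p^{\epsilon}_a \geq \alpha$. For the upper bound, note that $A \neq \emptyset$ forces $a \neq k$, hence $n \geq 2$; since every coordinate satisfies $p_j \geq \alpha > 0$, we get $p_k \leq 1 - (n-1)\alpha < 1$, so $p^{\epsilon}_k \leq 1$ for small $\epsilon$. Thus $\mathbf{p}^{\epsilon}$ is feasible.

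Next I would evaluate $U(\mathbf{p}^{\epsilon}) - U(\mathbf{p})$ term by term. The key observation, and the step that makes this an exact identity rather than a one-sided bound, is that $k$ remains the unique maximizer of $w_j p^{\epsilon}_j$: the perturbation strictly raises $w_k p_k$ to $w_k(p_k + \epsilon)$, strictly lowers $w_a p_a$, and leaves every other term unchanged, while $w_k p_k$ was already maximal. Hence the $\max$ term increases by exactly $w_k \epsilon$, whereas the linear term $\sum_j p_j \theta_j$ changes by $\epsilon(\theta_k - \theta_a)$. Adding these, $U(\mathbf{p}^{\epsilon}) - U(\mathbf{p}) = \epsilon(w_k + \theta_k - \theta_a)$. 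By the definition of $A$ we have $\theta_a > w_k + \theta_k$, so this difference is strictly negative, contradicting the optimality of $\mathbf{p}$. Therefore $p_a = \alpha$ for every $a \in A$.

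The only delicate point is the argmax-invariance claim in the third step; everything else is routine bookkeeping. I expect this to be the main obstacle only in that sense: one must confirm that pushing mass \emph{onto} the current maximizer cannot relocate the maximizer elsewhere, which holds precisely because the transfer moves mass toward $k$ and never away from it. Had one instead removed mass from $a$ and spread it over $B$, the identity of the maximizer could shift and the growth of the $\max$ term would no longer be a clean $w_k \epsilon$; routing the freed mass entirely to $k$ is exactly what keeps the accounting tight and yields the sharp inequality $w_k + \theta_k - \theta_a < 0$.
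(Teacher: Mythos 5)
Your proof is correct and follows essentially the same argument as the paper: both transfer probability mass from a coordinate $a \in A$ onto the maximizer $k$, note that $k$ remains the argmax of $w_j p_j$, and compute the objective change as $\epsilon(w_k + \theta_k - \theta_a) < 0$ to contradict optimality. The only cosmetic difference is that the paper moves the full excess $p_a - \alpha$ at once while you move an infinitesimal amount (and spell out the feasibility and argmax-invariance checks in more detail).
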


\begin{proof}
Assume $p_{a}=\alpha+\epsilon$ for some $a\in A$ and $\epsilon>0$. We construct a new solution $\mathbf{p}'$ with $p'_{a} = \alpha$, $p'_{k}=p_{k}+\epsilon$, and $p'_{j}=p_{j}$ for any other $j$. Observe that 
$\mathbf{p}'$ is a feasible solution and $k=\arg\max_{j}(w_{j}p'_{j})$. It follows that $U(\mathbf{p}')-U(\mathbf{p}) = w_{k}(p'_{k}-p_{k})+(p'_{a}-p_{a})\theta_{a}+(p'_{k}-p_{k})\theta_{k} = (w_{k}+\theta_{k})\epsilon-\theta_{a}\epsilon < 0$ since $\theta_{a}> w_{k}+\theta_{k}$ for any $a \in A$. This contradicts the fact that $\mathbf{p}$ is an optimal solution. 
\end{proof}





\begin{prop}
Assume $\alpha \leq \frac{1}{n\rho}$. There is an optimal $\mathbf{p}$ to~\eqref{mm} where we can find an index $q \in \{1,2,...,|B|\}$ such that $p_{b_j} = \frac{w_{k}}{w_{b_j}}p_{k}$ for $1 \leq j \leq q$ and $p_{b_j}=\alpha$ for $q < j \leq |B|$.


\end{prop}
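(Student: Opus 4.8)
The plan is to establish the claimed threshold structure through a pairwise exchange argument in the spirit of the preceding proposition, and then to remove the single ``fractional'' coordinate that such an argument can leave behind. Throughout, write $L=\max_{j}w_{j}p_{j}$ for the value of the max-term at a given $\mathbf{p}$, and recall that $U(\mathbf{p})=L+\sum_{j}\theta_{j}p_{j}$ is convex (a maximum of linear functions plus a linear function) over the compact set $\{\mathbf{p}\in[\alpha,1]^{n}:\sum_{j}p_{j}=1\}$, so an optimal $\mathbf{p}$ exists. By the preceding proposition I may fix $p_{a}=\alpha$ for every $a\in A$, so only the coordinates in $B\cup\{k\}$ remain to be pinned down.

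The core step is a pairwise exchange on $B$. Fix an optimal $\mathbf{p}$ and suppose, for two indices $b,b'\in B$ with $\theta_{b}\le\theta_{b'}$, that $b$ is strictly below the max level ($w_{b}p_{b}<L$) while $b'$ is strictly above the floor ($p_{b'}>\alpha$). Shifting a small mass $\epsilon>0$ from $b'$ to $b$ keeps the solution feasible, does not raise the max-term (for $\epsilon$ small $w_{b}p_{b}$ stays $\le L$ while $w_{b'}p_{b'}$ only decreases), and changes the linear term by $\epsilon(\theta_{b}-\theta_{b'})\le 0$; when $\theta_{b}<\theta_{b'}$ this strictly decreases $U(\mathbf{p})$, contradicting optimality, and when $\theta_{b}=\theta_{b'}$ the move is cost-free and can be used to reach a canonical solution. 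Since the $b_{j}$ are sorted so that $\theta$ is non-decreasing, this forbids a below-max coordinate from preceding an above-floor coordinate, which forces the monotone pattern: the at-max coordinates form a prefix $b_{1},\dots,b_{q}$, the at-floor coordinates a suffix, with at most one coordinate left strictly interior between them.

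It remains to eliminate that single interior coordinate $b_{r}$, and this is where the hypothesis $\alpha\le\frac{1}{n\rho}$ enters. Because $U$ is affine along any feasible line segment, it suffices to exhibit a feasible segment through $\mathbf{p}$ that drives $p_{b_{r}}$ to a boundary: I would move $p_{b_{r}}$ up toward the max level (or down toward $\alpha$) while compensating within the at-max group, equivalently by adjusting the shared level $L$, which changes each at-max coordinate $p_{b_{j}}$ by $dL/w_{b_{j}}$ and $p_{k}$ by $dL/w_{k}$. Following the segment to its endpoint, affineness guarantees at least one endpoint is no worse than $\mathbf{p}$ and hence optimal, and at that endpoint $b_{r}$ sits at the floor or at the max level. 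Feasibility is the only possible obstruction: pushing mass into the at-max group requires the compensating coordinates $p_{b_{j}}=\frac{w_{k}}{w_{b_{j}}}p_{k}$ to stay $\ge\alpha$, and the extreme case is when every coordinate is driven to the max level, where the smallest coordinate equals $\frac{1/w_{\max}}{\sum_{i}1/w_{i}}\ge\frac{w_{\min}}{n\,w_{\max}}=\frac{1}{n\rho}$. Thus $\alpha\le\frac{1}{n\rho}$ guarantees that the cleaned-up threshold profile stays feasible. Finally, since $k$ is itself an at-max coordinate, every at-max $b_{j}$ satisfies $w_{b_{j}}p_{b_{j}}=w_{k}p_{k}=L$, i.e. $p_{b_{j}}=\frac{w_{k}}{w_{b_{j}}}p_{k}$, which is exactly the claimed form.

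The delicate point is this last step. The max level $L$ is a shared, free variable coupling all at-max coordinates, so relocating the lone interior coordinate forces a simultaneous adjustment of every at-max coordinate, and one must certify that none of them is pushed below the floor $\alpha$ during the move; checking that $\alpha\le\frac{1}{n\rho}$ supplies exactly the slack needed, via the worst-case all-at-max profile, is where I expect the real work to lie. The exchange inequality in the second step, by contrast, is routine, and an alternative and perhaps cleaner route to ``no interior coordinate'' is to recast~\eqref{mm} as a linear program with epigraph variable $L$ and invoke the existence of an optimal vertex, at which a tight-constraint count already forces every coordinate of $B$ to a boundary.
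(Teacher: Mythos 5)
Your proposal follows essentially the same two-stage route as the paper's proof: your pairwise exchange on $B$ is the paper's first step almost verbatim, your affine-segment/endpoint argument for removing the lone interior coordinate $b_r$ is a cleaner packaging of the paper's explicit two-case comparison of marginal costs (the paper's Case 1/Case 2 conditions are exactly the statement of which endpoint of your segment is cheaper), and your worst-case feasibility computation corresponds to the paper's claim (i).

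There is, however, one genuine gap in the last step. Your endpoint argument rests on $U$ being \emph{affine} along the segment, and affineness requires that the max term equal the group level $L = w_k p_k$ throughout the move. In the direction that raises $p_{b_r}$ to the max level, $L$ \emph{decreases}, so one must also verify that no coordinate pinned at $\alpha$ --- in particular the coordinates in $A$, whose $w_a$ can be as large as $w_{\max}$ --- overtakes the shrinking level, i.e.\ that $w_k p''_k \geq \alpha w_{\max}$ at the endpoint. This is exactly the paper's claim (ii), which it proves by a separate computation from $\alpha \leq \frac{1}{n\rho}$. If this failed, the max would switch to a floor coordinate, $U$ would be only piecewise affine (hence convex) along the segment, and the ``minimum at an endpoint'' conclusion --- which is false for general convex functions --- would not follow; moreover $k$ would cease to be the argmax, so the constructed profile would no longer have the form claimed in the statement relative to $k$. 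You check only the floor constraint $p''_{b_j} \geq \alpha$. As it happens, your own bound (smallest at-max coordinate $= L'/w_{\max} \geq \frac{1}{n\rho} \geq \alpha$) is equivalent to the needed inequality $L' \geq \alpha w_{\max}$, so the repair is a one-line remark --- but the requirement has to be recognized and invoked. A secondary, smaller point: you assert without proof that the all-at-max profile is the extreme case for the floor constraint; this is true, but reducing the general configuration (with $f$ coordinates held at $\alpha$) to it needs the observation $f + (n-f)\rho \leq n\rho$ since $\rho \geq 1$, whereas the paper simply carries out the computation for general $q$.
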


\begin{proof}
We first make the following observation. Consider any optimal solution $\mathbf{p}$ to~\eqref{mm}. Assume that there are $j_1, j_2 \in \{1,...|B|\}$ such that $j_1<j_2$, $p_{b_{j_1}} < \frac{w_{k}}{w_{b_{j_1}}}p_{k}$, and $p_{b_{j_2}}>\alpha$. We claim that we can construct a new optimal solution $\mathbf{p}'$ such that either $p'_{b_{j_1}} = \frac{w_{k}}{w_{b_{j_1}}}p'_{k}$ or $p'_{b_{j_2}}=\alpha$ (or both) while keeping other probabilities unchanged. To see this, let $\epsilon_1 = \frac{w_{k}}{w_{b_{j_1}}}p_{k}-p_{b_{j_1}}$ and $\epsilon_2 = p_{b_{j_2}}-\alpha$. We distinguish two cases. 

{\bf Case 1: $\epsilon_1 \leq \epsilon_2$:} we define $p'_{b_{j_1}} = p_{b_{j_1}}+\epsilon_1$, $p'_{b_{j_2}}=p_{b_{j_2}}-\epsilon_1$, and $p'_{j}=p_{j}$ for any other $j$. Observe that 
$\mathbf{p}'$ is a feasible solution and $p'_{b_{j_1}} = \frac{w_{k}}{w_{b_{j_1}}}p'_{k}$. Further, we still have $w_{k}p'_{k}=\max_{j}(w_{j}p'_{j})$. It follows that $U(\mathbf{p}')-U(\mathbf{p}) = (\theta_{b_{j_1}}-\theta_{b_{j_2}})\epsilon_1 \leq 0$ since $j_1<j_2$. Thus, $\mathbf{p}'$ is optimal. 

{\bf Case 2: $\epsilon_1>\epsilon_2$:} we define $p'_{b_{j_1}} = p_{b_{j_1}}+\epsilon_2$, $p'_{b_{j_2}}=p_{b_{j_2}}-\epsilon_2$, and $p'_{j}=p_{j}$ for any other $j$. 
$\mathbf{p}'$ is again feasible and $p'_{b_{j_2}} = \alpha$, and we still have $w_{k}p'_{k}=\max_{j}(w_{j}p'_{j})$. It follows that $U(\mathbf{p}')-U(\mathbf{p}) = (\theta_{b_{j_1}}-\theta_{b_{j_2}})\epsilon_2 \leq 0$ since $j_1<j_2$. Thus, $\mathbf{p}'$ is optimal. 

From the above observation, starting from any optimal solution $\mathbf{p}$, we can construct a new optimal solution $\mathbf{p}'$ in which there is an index $q \in \{1,2,...,|B|\}$ such that $p'_{b_j} = \frac{w_{k}}{w_{b_j}}p'_{k}$ for $1 \leq j < q$, $p_{b_j}=\alpha$ for $q < j \leq |B|$, and $p'_{b_q} \in [\alpha, \frac{w_{k}}{w_{b_q}}p'_{k}]$. We then show that, starting from such a $\mathbf{p}'$, we can construct a new optimal solution $\mathbf{p}''$ that satisfies the statement in the theorem. The main idea is to move a small amount of value from $\{p'_{b_j}, j < q\} \cup \{p'_{k}\}$ to $p'_{b_q}$ or the other way around depending on which direction is more beneficial. To this end, we again distinguish two cases:

{\bf Case 1:} $(1+\sum_{j< q}\frac{w_{k}}{w_{b_j}})\theta_{b_q}>\sum_{j<q}\frac{w_{k}}{w_{b_j}}\theta_{b_j}+(w_{k}+\theta_{k})$: Let $\epsilon>0$ be a small value to be determined. We construct a new solution $\mathbf{p}''$ where $p''_{k} =  p'_{k}+\epsilon$, $p''_{b_j} =  p'_{b_j}+\frac{w_{k}}{w_{b_j}}\epsilon$ for all $j<q$, $p''_{b_q} = p'_{b_q}- \epsilon-\sum_{j<q} \frac{w_{k}}{w_{b_j}}\epsilon$, and $p''_{b_j} = p'_{b_j}$ for other $j$. Note that $p''_{b_j} = \frac{w_{k}}{w_{b_j}}p''_{k}$ is maintained for $j<q$. Further, we can choose $\epsilon$ so that $p''_{b_q} = \alpha$. It is easy to see that $\mathbf{p}''$ is a feasible solution. Further, $U(\mathbf{p}'')-U(\mathbf{p}') = \Big(-(1+\sum_{j< q}\frac{w_{k}}{w_{b_j}})\theta_{b_q}+\sum_{j<q}\frac{w_{k}}{w_{b_j}}\theta_{b_j}+(w_{k}+\theta_{k})\Big)\epsilon \leq 0$. Hence, $\mathbf{p}''$ is also optimal.

{\bf Case 2:} $(1+\sum_{j< q}\frac{w_{k}}{w_{b_j}})\theta_{b_q}\leq\sum_{j<q}\frac{w_{k}}{w_{b_j}}\theta_{b_j}+(w_{k}+\theta_{k})$: We construct a new solution $\mathbf{p}''$ where $p''_{k} =  p'_{k}-\epsilon$, $p''_{b_j} =  p_{b_j}-\frac{w_{k}}{w_{b_j}}\epsilon$ for all $j<q$, $p''_{b_q} = p'_{b_q}+ \epsilon+\sum_{j<q} \frac{w_{k}}{w_{b_j}}\epsilon$, and $p''_{b_j} = p'_{b_j}$ for other $j$. We again have $p''_{b_j} = \frac{w_{k}}{w_{b_j}}p''_{k}$ for $j<q$, and we can choose $\epsilon$ so that $p''_{b_q} = \frac{w_{k}}{w_{b_q}}p''_{k}$. We claim that when $\alpha \leq \frac{1}{n\rho}$, we further have (i) $p''_{b_j} \geq \alpha$ for  $j \leq q$ and (ii) $p''_{j}w_{j}\leq w_{k} p''_{k}$ for $j \in A \cup \{b_{q+1},b_{q+2}, ..., b_{|B|}\}$. From these properties, we can conclude that $\mathbf{p}''$ is a feasible solution, and $U(\mathbf{p}'')-U(\mathbf{p}') = \Big((1+\sum_{j< q}\frac{w_{k}}{w_{b_j}})\theta_{b_q}-\sum_{j<q}\frac{w_{k}}{w_{b_j}}\theta_{b_j}-(w_{k}+\theta_{k})\Big)\epsilon \leq 0$. Thus, $\mathbf{p}''$ is also optimal. 

To prove claim (i), we have  
\begin{align*}
    p''_{b_j}-\alpha = & \frac{w_{k}}{w_{b_j}} \cdot \frac{1-(|A|+|B|-q)\alpha}{1+\sum_{j\leq q}\frac{w_{k}}{w_{b_j}}} - \alpha\\
    \geq & \frac{1}{w_{\max}} \cdot \frac{1-(|A|+|B|-q)\alpha}{\frac{1}{w_{k}}+\sum_{q \leq j}\frac{1}{w_{b_j}}} - \alpha\\
    = & \frac{1-(|A|+|B|-q)\alpha}{(q+1)\rho} - \alpha\\
    = & \frac{1-(n-1-q)\alpha}{(q+1)\rho} - \alpha \\
    = & \frac{1-[n-1-q+(q+1)\rho]\alpha}{(q+1)\rho} \\
    = & \frac{1-[n-1+\rho+q(\rho-1)]\alpha}{(q+1)\rho} \\
    \geq & \frac{1-[n-1+\rho+(n-1)(\rho-1)]\alpha}{n\rho} \\
    = & \frac{1-n\rho\alpha}{n\rho} \\
    \geq & 0
\end{align*}  
where the last inequality follows from $\alpha \leq \frac{1}{n\rho}$.

To prove claim (ii), it suffices to show that $p''_{k}w_{k} \geq \alpha w_{\max}$. We have
\begin{align*}
    p''_{k}w_{k}-\alpha w_{\max} &= \frac{1-(|A|+|B|-q)\alpha}{1+\sum_{j\leq q}\frac{w_{k}}{w_{b_j}}}w_{k} - \alpha w_{\max} \\ 
    &\geq \frac{1-(n-1-q)\alpha}{\frac{1}{w_{k}}+\sum_{j\leq q}\frac{1}{w_{m_j}}} - \alpha w_{\max}\\
    &\geq \frac{1-(n-1-q)\alpha}{1+q}w_{\min}-\alpha w_{\max} \\
    &= \frac{[1-(n-1-q)\alpha]w_{\min}-(1+q)\alpha w_{\max}}{1+q}\\
    &= \frac{w_{\min}[1-(n+(\rho-1)(1+q))\alpha]}{1+q}\\
    &\geq \frac{w_{\min}[1-(n+(\rho-1)n)\alpha]}{n}\\
    &= \frac{w_{\min}[1-\rho n \alpha]}{n}\\
    &\geq 0
\end{align*}
where the last inequality follows from $\alpha \leq \frac{1}{n\rho}$.

\end{proof}

\begin{algorithm}[t]
\caption{Solving the Min-Max problem~\eqref{mm}}
Input: $S, \{w_i\}, \{m_i\}, \gamma, V, \alpha.$ \\
Output: $\mathbf{p}^*.$
\begin{algorithmic}[1] 
\STATE $\theta_{j}= m_{j}+\gamma V(j)$, $\forall j\in S$;
\STATE $u=\infty$
\FOR{$k\in S$}
\STATE $p_{j}=\alpha$, for all $j$ such that $\theta_{j}> w_{k}+\theta_{k}$; 
\STATE $B=\{b:\theta_{b}\leq w_{k}+\theta_{k}\}$;
\STATE $\{b_j\} = $ the sequence of items in $B$ sorted in $\theta$ non-decreasingly; 
\FOR{$q=1; q\leq |B|; q=q+1$}
\STATE $p_{k}=\frac{1-|A|\alpha-(|B|-q+1)\alpha}{\sum_{j<q}{\frac{w_k}{w_j}}+1};$
\STATE $p_{b_j}=\frac{w_{k}}{w_{b_j}}p_{k}$, $\forall j \leq q$, $p_{b_j}=\alpha$, $\forall j> q$;
\IF{$w_{k}p_{k}+\sum_{j\in S} p_{j}\theta_{j} < u$}
\STATE $u = w_{k}p_{k}+\sum_{j\in S} p_{j}\theta_{j}$;
\STATE $\mathbf{p}^*= \mathbf{p}$;
\ENDIF
\ENDFOR
\ENDFOR
\end{algorithmic}
\end{algorithm}

Based on the two propositions above, we then design an efficient solution to~\eqref{mm} (see Algorithm 2). The algorithm iterates over all $k \in S$. For a given $k$, two sets $A$ and $B$ are identified and $p_j = \alpha$ for $j \in A$ (line 4). We then search for a proper index $q \leq |B|$ and set the value of $p_{b_j}$ for $b_j \in B$ according to  Proposition 2 (lines 7-9).
The running time of Algorithm 2 is dominated by sorting all the configurations according to their $\theta$ values for each $k$. Thus, the complexity of the algorithm is $O(n^2\log n)$, which is much faster than searching the whole probability space. 

\section{Numerical Results}\label{sec:eval}
\begin{figure*}[t]
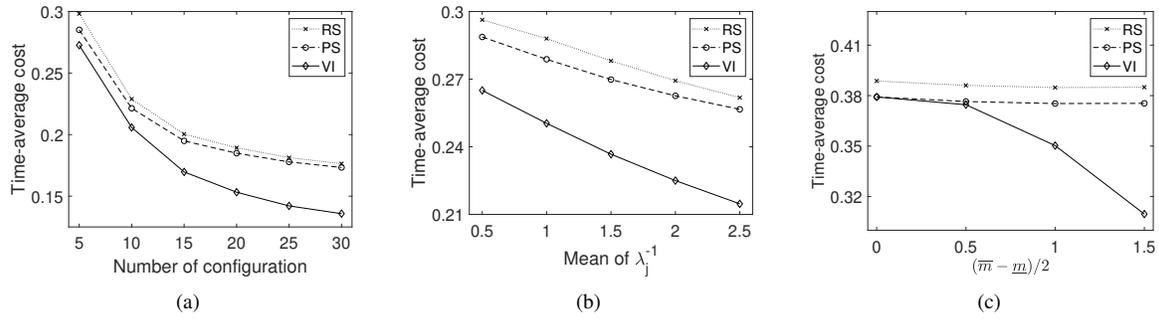

\centering
\subfigure[]{
\label{Fig.n}
\includegraphics[width=2in]{n.eps}}
\subfigure[]{
\label{Fig.mu}
\includegraphics[width=2in]{mu.eps}}
\subfigure[]{
\label{Fig.M}
\includegraphics[width=2in]{M.eps}}
\caption{\small Simulation Results.}
\label{Fig.result}
\end{figure*}

In this section, we evaluate our MTD strategy through numerical studies under different system settings and demonstrate its advantage by comparing it with two heuristic strategies where a fixed defense period is used for all configurations: 



\begin{enumerate}
    \item Random sampling (RS): The defender stays in the current configuration for a fixed duration $\tau$ and then moves to a new configuration with probability $1/n$. The optimal $\tau$ is obtained by solving the following problem.
    \begin{equation*}
      \min_{\tau}\frac{\max_j\mathbb{E}(\max(\tau-a_j,0))+\frac{1}{n}\sum_{i,j}m_{ij}}{n\tau}
    \end{equation*}
    \item Proportional sampling (PS): The defender stays in the current configuration for a fixed duration $\tau$ and then moves to a new configuration $j$ with probability $p_j$ that is proportional to $w_j = \mathbb{E}[\max(\tau-a_j,0)]$. The defense strategy $(\tau, \{p_j\})$ is obtained by solving the following problem.
    \vspace{-1ex}
    \begin{alignat*}{2}
        \min_{\tau, \{p_j\}} && \frac{\max_{j}(w_{j}p_{j})+\sum_{i,j}p_{i}p_{j}m_{ij}}{\tau}\\
     s.t. && \hspace{2ex}
        w_j = \mathbb{E}(\max(\tau-a_j,0), \forall j\in S \\
        && w_{i}p_{i}=w_{j}p_{j}, \forall i,j\in S\\
        && \sum_{j\in S}p_{j} =1
    \end{alignat*}
\end{enumerate}

\vspace{1ex}
\noindent{\bf Simulation setup:} In the simulations, the number of configurations $n$ is chosen from $\{5, 10,...,30\}$. The set of migration costs are $i.i.d.$ samples from a uniform distribution. For each configuration $j$, its attack time $a_j$ follows an exponential distribution with parameter $\lambda_j$, where $\lambda_j$ is sampled from a uniform distribution and is $i.i.d.$ across $j$. In each simulation, we conduct 100 trials by taking 10 samples of the migration cost matrix $M$ and 10 samples of $\{\lambda_j\}$. We set $\alpha = 0.01$ and $\omega = 0.01$ in Algorithms 1 and 2. We set $\underline{\tau}=0.1, \bar{\tau} = 5$, and $\delta = 0.1$. For each $\tau_i \in \{0.1,0.2, ...,5\}$ and each $\lambda_j$, we estimate $w_{ij}$ by taking 500 samples of $a_j$, which are inputs to all the three policies.

\vspace{1ex}
\noindent{\bf Simulation results:} In Figure~\ref{Fig.result}(a), we evaluate the performance of the three policies by varying the number of configurations. The migration costs are sampled from $U(0,1.5)$ and $\lambda^{-1}_j$ are sampled from $U(1,2)$ for all $j$. 
We observe that for the all three strategies, the time-average cost decreases as the number of configurations increase. This is because the uncertainty to the attacker increases with $n$.  Moreover, 
the cost of VI decreases much faster than the two baselines, which indicates the weakness of the simple heuristics for large $n$. 

Figure~\ref{Fig.result}(b) compares the average costs of the three policies when the mean attack time $\lambda^{-1}_j$ is sampled from $U(\nu-0.5,\nu+0.5)$ for each $j$ where $\nu$ increases from 0.5 to 2.5. The number of configurations is fixed to 10 and the migration costs are sampled from $U(0.5,1)$. It is expected that the costs of all the strategies decrease as attack time increases. Again, our algorithm performs much better than the two baselines. Further, the gap increases for large $\nu$. This is because for large attack time, the migration cost becomes the dominant factor, which is not properly taken into account in the two baselines.

In Figure~\ref{Fig.result}(c), we compare the three policies under different variances of the migration cost distribution. In this case, $n=10$ and $\lambda^{-1}_j$ is sampled from $U(0.5,1.5)$ for each $j$. The migration costs are sampled from $U[\underline{m},\overline{m}]$ where the mean migration cost $(\underline{m}+\overline{m})/2$ is fixed to 1.5 and we vary $(\underline{m}-\overline{m})/2$. We observe that the performance of PS is close to VI for small variances while the gap becomes bigger for large variances. This is because when each node has a similar migration cost, the loss due to attacks becomes the dominant part in the total cost, which is considered in both PS and VI. On the other hand, when the variance becomes large, our algorithm is able to better handle the heterogeneity of configurations by jointly optimizing $P$ and $\tau$ and by considering a different $\tau_i$ for each $i$. 
\section{Conclusion}
\label{sec:conclusion}
In this paper, we propose a Stackelberg game model for moving target defense (MTD) that jointly considers the spatial and temporal decisions in MTD. In contrast to the $i.i.d.$ strategies considered in most previous works, our model considers the more general Markovian strategies and further incorporates state-dependent attack times. By formulating the defender's problem as a semi-Markovian decision process, we derive a nearly optimal defense strategy that can be efficiently implemented by utilizing the structure of the MTD game.    
\bibliographystyle{IEEEtran}
\bibliography{ref}
\end{document}